  \providecommand\BibTeX{{%
    \normalfont B\kern-0.5em{\scshape i\kern-0.25em b}\kern-0.8em\TeX}}}
\newcommand{\ceil}[1]{\left\lceil #1\right\rceil}
\newcommand{\floor}[1]{\left\lfloor #1\right\rfloor}
\newcommand{\set}[1]{\left\{ #1\right\}}
\newcommand{\gilt}{:}
\newcommand{\setGilt}[2]{\left\{ #1\gilt #2\right\}}
\newcommand{\realrange}[2]{\left[#1, #2\right]}
\newcommand{\unitrange}[2]{\realrange{0}{1}}
\newcommand{\Oh}[1]{\mathrm{O}\!\left( #1\right)}
\newcommand{\Ohsmall}[1]{\mathrm{O}(#1)}
\newcommand{\Th}[1]{\mathrm{\Theta}\!\left( #1\right)}
\newcommand{\Om}[1]{\mathrm{\Omega}\!\left(#1\right)}
\newcommand{\llabel}[1]{\label{\labelprefix:#1}}
\newcommand{\labelprefix}{} 
\newcommand{\punkt}{\text{.}}  
\newcommand{\komma}{\text{,}} 
\newcommand{\labelcommand}{}
\newcommand{\captiontext}{}
\newsavebox{\buchalgorithmparam}
\newcounter{lineNumber}
\newenvironment{buchalgorithmpos}[3]{%
\renewcommand{\labelcommand}{#2}%
\renewcommand{\captiontext}{#3}%
\sbox{\buchalgorithmparam}{\parbox{\textwidth}{#3}}%
\begin{figure}[#1]\begin{code}\setcounter{lineNumber}{1}}
{\end{code}%
\caption{\llabel{\labelcommand}\captiontext}\end{figure}}
\newenvironment{code}{\noindent\it%
\begin{tabbing}%
\hspace{1.5em}\=\hspace{1.5em}\=\hspace{1.5em}\=\hspace{1.5em}\=\hspace{1.5em}\=\hspace{1.5em}\=\hspace{1.5em}\=%
\hspace{1.5em}\=\hspace{1.5em}\=\hspace{1.5em}\=\hspace{1.5em}\=\hspace{1.5em}\=%
\kill}{\end{tabbing}}
\newlength{\mynegthinspace}
\newlength{\mysmallspace}
\newcommand{\Is}{\ensuremath{\mathbin{:=}}}
\newdimen\endofsize\endofsize=0.5em
\newcommand{\donotshow}[1]{}
\newcommand{\ignore}[1]{}
\newcommand{\mbegin}{\{\ \ }
\newlength{\mleftindent}
\newlength{\mindent}
\newlength{\mboxwidth}
\newlength{\preprogramskip}
\newlength{\postprogramskip}
\newlength{\mexpwidth}
\newlength{\mexpindent}
\newlength\fboxsepsave
\newlength{\proofpostskipamount}
\newlength{\proofpreskipamount}
\newlength{\eqproofnegskip}
\newlength{\qedafterequation}
\par\vspace{\proofpreskipamount}\noindent{\bf Proof:}\hspace{0.5em}}
\par\vspace{\proofpostskipamount}\noindent}
\newenvironment{Proof}[1]%
               {\par\vspace{0.5ex}\noindent{\bf Proof #1:}\hspace{0.5em}}%
               {\nopagebreak%
                \strut\nopagebreak%
                \hspace{\fill}\qed\par\medskip\noindent}
\newcommand{\myurl}[1]{{\footnotesize \url{#1}}}
\newcommand{\MRC}{MRC}
\newcommand{\MRCplus}{MRC$^+$}
\begin{document}
\pagestyle{plain}
\title{
  Connecting MapReduce Computations to\\ Realistic Machine Models}

\author{Peter Sanders}
\email{sanders@kit.edu}
\orcid{0000-0003-3330-9349}
\affiliation{%
  \institution{Karlsruhe Institute of Technology}
  \streetaddress{am Fasanengarten 5}
  \city{Karlsruhe}
  \country{Germany}
  \postcode{76128}
}


\begin{abstract}
  We explain how the popular, highly abstract MapReduce model of
  parallel computation (MRC) can be rooted in reality by explaining
  how it can be simulated on realistic distributed-memory parallel
  machine models like BSP. We first refine the model (MRC$^+$) to
  include parameters for total work $w$, bottleneck work $\hat{w}$,
  data volume $m$, and maximum object sizes $\hat{m}$.  We then show
  matching upper and lower bounds for executing a MapReduce
  calculation on the distributed-memory machine --
  $\Theta(w/p+\hat{w}+\log p)$ work and $\Theta(m/p+\hat{m}+\log p)$ bottleneck
  communication volume using $p$ processors.
\end{abstract}

\begin{CCSXML}
<ccs2012>
<concept>
<concept_id>10003752.10003753.10010622</concept_id>
<concept_desc>Theory of computation~Abstract machines</concept_desc>
<concept_significance>300</concept_significance>
</concept>
<concept>
<concept_id>10003752.10003809.10010170.10003817</concept_id>
<concept_desc>Theory of computation~MapReduce algorithms</concept_desc>
<concept_significance>500</concept_significance>
</concept>
<concept>
<concept_id>10003752.10003809.10010170.10010174</concept_id>
<concept_desc>Theory of computation~Massively parallel algorithms</concept_desc>
<concept_significance>300</concept_significance>
</concept>
</ccs2012>
\end{CCSXML}

\ccsdesc[300]{Theory of computation~Abstract machines}
\ccsdesc[500]{Theory of computation~MapReduce algorithms}
\ccsdesc[300]{Theory of computation~Massively parallel algorithms}

\keywords{parallel machine models, MapReduce computations, BSP, communication-efficient algorithm, load balancing, fault tolerance, work stealing, prefix sum}


\maketitle

\section{Introduction}
\begin{figure}[t]
\input{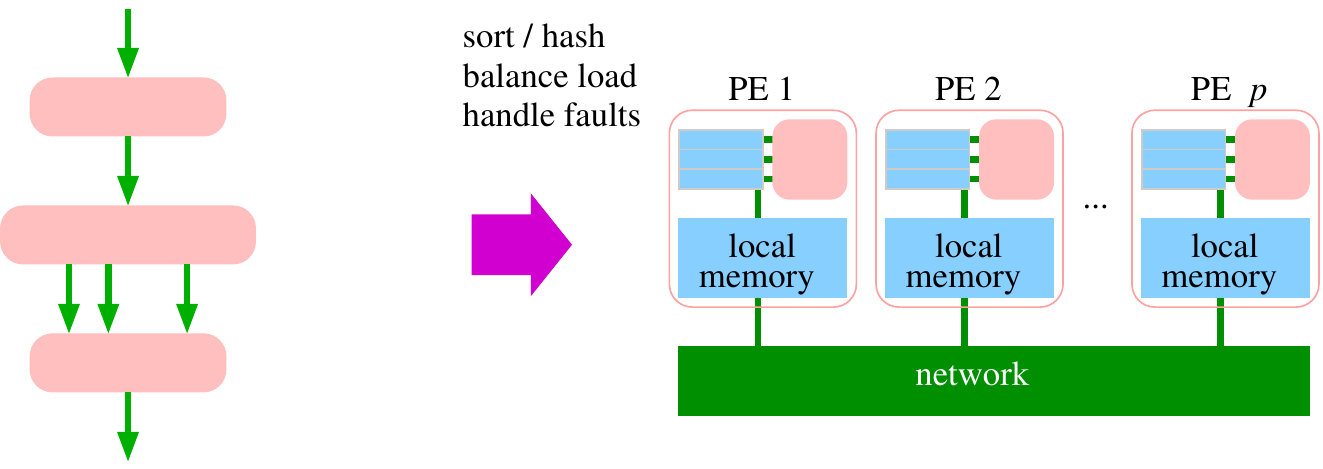_t}
  \caption{Several algorithmic measures are needed in order to
    robustly execute computations specified in the highly abstract
    MapReduce model on a realistic machine.}
  \label{fig:teaser}
\end{figure}

MapReduce \cite{DeanGhemawat08} is a simple and successfull model for
parallel computing.
Tools like MapReduce/Hadoop, Spark or Thrill have ``democratized''
massively parallel computing. What was previously only used for
numerical simulations that need investments of many person-years to
get an application running is now used for a wide range of ``big
data'' applications. Simple applications can be built within an hour
and there is a rather gentle learning curve. One reason for this
success is that a simple operation like the MapReduce transformation
of multisets can express a large range of applications. The tools can
automatically handle difficult issues like parallelization, load
balancing, fault tolerance, and management of the memory hierarchy.

MapReduce steps get a multiset $A\subseteq I$ of
elements from an input data type $A$ and \emph{map} $A$ to a multiset
of \emph{key--value pairs} $B=\bigcup_{a\in A}\mu(a)\subseteq K\times
V$ for a user-defined mapping function $\mu$. Next, values with the
same key are collected together (\emph{shuffling}), i.e., the system
computes the set
$$C=\setGilt{(k,X)}{k\in K\wedge
  X=\setGilt{x}{(k,x)\in B}\wedge X\neq\emptyset}\punkt$$  Finally, a user defined
\emph{reduction} function $\rho$ is applied to the elements of $C$ to
obtain an output multiset $D$.  Representing the elements in $A$--$D$
may take variable space that we measure in machine words of a random
access machine.
Figure~\ref{fig:teaser} summarizes the resulting logical data flow. The user only needs
to specify $\mu$ and $\rho$; the system is taking care of the
rest. Chaining several MapReduce steps with different mapping and
reduction operations yields a wide spectrum of useful applications.

The MapReduce concept has been developed into a theoretical model of
``big data'' computations (\MRC) \cite{karloff:mapreduce} that is
popular in the algorithm theory community.  Problems are in \MRC\ if
they can be solved using a polylogarithmic number of MapReduce steps
and if a set of (rather loose) additional constraints is fulfilled:
Let $n$ denote the input size and $\epsilon>0$ some constant.  The
time for one invocation of $\mu$ or $\rho$ must be polynomial in $n$
using ``substantially'' sublinear space, i.e.,
$\Ohsmall{n^{1-\epsilon}}$. The overall space used for $B$ must be
``substantially'' subquadratic, i.e.,
$\Ohsmall{n^{2-2\epsilon}}$. While \MRC\ has given new impulses to
parallel complexity theory, it opens a gap between theory and
practice. \MRC\ based algorithms that use the full leeway of the model
are unlikely to be efficient in practice. They are not required to
achieve any speedup over the best practical sequential algorithm. They
are also allowed to use near quadratic space so that they may not be
able to solve large instances at all.  There is also a possible bias
that would prefer publications on impractical \MRC-algorithms --
practical ones are more likely to be similar to known algorithms in
other parallel models and thus could be more difficult to publish.

However, we believe that a slightly more precise analysis can yield a
model \MRCplus\ that is more predictive for efficiency and scalability
yet maintains the high level of abstraction of \MRC.  The main change
is to not only count MapReduce steps but to also analyze work and
communication volume based on the following four parameters:
Let $w$ denote the total time needed to evaluate the functions $\mu$
and $\rho$ on all their inputs.  Let $\hat{w}$ denote the maximum time for a single call
to these functions.  Let $m$ denote the total number of machine words
contained in the sets $A$--$D$.  Let $\hat{m}$ denote the maximum
number of machine words produced or consumed by one call of the
functions $\mu$ and or $\rho$.
%

The main contribution of this paper is to prove the following theorem about executing
MapReduce computations on a distributed-memory machine with $p$ processing elements (PEs): 
\begin{theorem}\label{thm:bound}
Assume that the input set $A$ of a MapReduce step is distributed over the PEs such that
each PE stores $\Oh{m/p+\hat{m}}$ words of it.  Then it can be implemented to run on a distributed-memory parallel computer with expected local work%
\footnote{The maximum number of clock cycles required by any PE, including waiting times.}
and bottleneck communication volume%
\footnote{The maximum number of machine words communicated by any PE.}
\begin{equation}\label{eq:bound}
  \Th{\frac{w}{p}+\hat{w}+\log p}
  \text{ and }
  \Th{\frac{m}{p}+\hat{m}+\log p}\komma
\end{equation}
respectively. These bounds are tight, i.e., there exist inputs where
no better bounds are possible.  Moreover, no PE produces more than
$$\Oh{\sum_{d\in D}\frac{|d|}{p}+\max_{d\in D}|d|}=\Oh{\frac{m}{p}+\hat{m}}$$
words of output data.
\end{theorem}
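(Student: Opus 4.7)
The plan is to exhibit a three-phase algorithm that matches the stated upper bounds, then present simple hard instances witnessing the lower bounds. The central technical device is to combine randomized hashing for routing with global prefix sums for balanced placement, treating ``heavy'' invocations (of cost close to $\hat{w}$ or producing close to $\hat{m}$ words) as a small set of exceptions that each contribute at most one additive $\hat{w}$ or $\hat{m}$ term to the per-PE cost. Throughout, I would use BSP-style collective primitives: prefix sum and all-to-all, each costing $O(\log p)$ local work and bottleneck communication matching the per-PE data volume.

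\textbf{Upper bounds.} In the map phase, each PE evaluates $\mu$ on its local slice of $A$. Since by assumption that slice already has size $O(m/p+\hat{m})$, and since any single call contributes at most $\hat{w}$ work and $\hat{m}$ output, a randomized reassignment of input items to PEs yields, via a Chernoff/balls-and-bins analysis, bottleneck work $O(w/p+\hat{w}+\log p)$ and output buffers of size $O(m/p+\hat{m}+\log p)$ w.h.p. In the shuffle phase, keys are assigned to PEs via a random hash function; a prefix sum on the per-destination byte counts (cost $O(\log p)$) produces routing offsets, after which an all-to-all achieves bottleneck volume $O(m/p+\hat{m}+\log p)$. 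A key $k$ whose value list $X_k$ has size $\Theta(\hat{m})$ is ``heavy'' and sent to a single dedicated PE; since each heavy key contributes at least $\Theta(\hat{m})$ to $m$, at most $O(m/\hat{m})$ such keys exist and they can be spread so each PE receives at most one, adding only one additive $\hat{m}$ term. The reduce phase then invokes $\rho$ on each received group, reusing the same balls-and-bins argument on keys to obtain local work $O(w/p+\hat{w}+\log p)$. A final prefix sum on produced output sizes re-blocks $D$ so each PE stores $O(m/p+\hat{m})$ words, which yields the last claim of the theorem.

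\textbf{Lower bounds.} The $\Omega(w/p)$ and $\Omega(m/p)$ bounds are immediate pigeonhole facts on instances consisting only of uniform, cheap calls. The $\Omega(\hat{w})$ and $\Omega(\hat{m})$ bounds are witnessed by an input containing a single $\mu$- or $\rho$-call of cost $\hat{w}$ reading/writing $\hat{m}$ machine words: because $\mu$ and $\rho$ are opaque user functions, their evaluation cannot be split across PEs, so some PE pays the full cost. Finally, the $\Omega(\log p)$ bounds are inherited from the standard diameter/broadcast lower bound on any distributed-memory model: even announcing termination of a MapReduce step, or producing a single non-empty output when exactly one input triggers it, requires $\Omega(\log p)$ rounds of point-to-point communication.

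The main obstacle will be the simultaneity of the two bounds through the shuffle phase: producing a routing schedule that is at the same time $O(w/p+\hat{w}+\log p)$ in work and $O(m/p+\hat{m}+\log p)$ in bottleneck communication, despite inputs where many keys are moderately ``warm'' (just below the heavy-key threshold) and hashing alone concentrates too much volume on a single PE. Overcoming this will require combining hashing with a secondary prefix-sum balancing step on warm keys so that no PE exceeds the target $m/p+\hat{m}$ volume by more than an additive $\log p$, and verifying that the Chernoff slack for both work and data is tight enough to fit inside the claimed $\log p$ additive term rather than a multiplicative one.
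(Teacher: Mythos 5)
Your proposal has several genuine gaps, the most serious being the reliance on Chernoff/balls-and-bins for the map and reduce phases. A randomized static assignment of jobs to PEs does \emph{not} give bottleneck work $\Oh{w/p+\hat{w}+\log p}$: if the work consists of $k=\ceil{w/\hat{w}}\approx p$ jobs of cost $\hat{w}$ each, the expected maximum bin occupancy is $\Th{\log p/\log\log p}$, so the bottleneck work is $\Th{\hat{w}\log p/\log\log p}$ --- a \emph{multiplicative} logarithmic overhead over $w/p+\hat{w}=\Th{\hat{w}}$, not an additive $\log p$ term. The paper's Theorem~\ref{thm:MapReduceBSP} makes exactly this point: static randomization only achieves the claimed bound when $w=\Om{\hat{w}p\log p}$ and $m=\Om{\hat{m}p\log p}$. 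Closing this gap is the paper's main technical contribution: a distributed-memory work-stealing scheduler with jobs pre-sorted by description length (Theorem~\ref{thm:steal}), which adapts to the unknown per-job costs; the same issue afflicts your reduce phase. For the shuffle, you treat only ``heavy'' keys specially and yourself flag the ``warm key'' problem in your closing paragraph without resolving it; the paper's resolution (Lemma~\ref{lem:shuffle}) is to hash keys to a large range $1..m^c$, ship only constant-size count pairs $(h(k),|k|+|x|)$, and assign \emph{all} keys to PEs by a prefix sum over aggregated volumes, so that no concentration slack enters the data bound at all. Finally, your ``final prefix sum to re-block $D$'' does not establish the postcondition, which bounds what each PE \emph{produces}, not what it ends up storing: a PE that happens to map many elements that each emit close to $\hat{m}$ words has already exceeded the communication budget before any re-blocking can help. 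The paper needs a separate mechanism for this (redundant remapping or work stealing with strikes, Section~\ref{s:redundant}).

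On the lower bounds, the $\Om{m/p}$ bottleneck-communication bound is not a pigeonhole fact. Since the algorithm is free to place the input so that all key--value pairs sharing a key already reside on one PE, an instance of ``uniform cheap calls'' may require no shuffle communication whatsoever. One must exhibit an instance for which \emph{every} assignment of $A$ and $C$ to $p$ PEs forces a constant fraction of $B$ to cross PE boundaries; the paper does this by taking the bipartite graph connecting elements of $A$ to keys of $C$ to be an expander. Your arguments for $\Om{w/p}$, $\Om{\hat{w}}$, $\Om{\hat{m}}$, and $\Om{\log p}$ do match the paper's.
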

Hence, the five parameters $w$,
$\hat{w}$, $m$, $\hat{m}$, and $p$ govern the complexity of the
algorithm in an easy to remember way.
Note that the precondition and postcondition of Theorem~\ref{thm:bound} are formulated in such a way
that multiple MapReduce steps can be chained.

Of course, there are middle-ways
between the zero-parameter model of \MRC\ and our proposed \MRCplus\
model. We could impose the constraints $\hat{w}=\Oh{w/p}$ and $\hat{m}=\Oh{m/p}$, 
thus hiding the parameters $\hat{w}$ and $\hat{m}$ from the main bound. However, this would
neglect that also inefficient MapReduce steps can be part of an
overall efficient computation that consists of many steps.  With
Bound~(\ref{eq:bound}) we can prove overall efficiency by summing over
all MapReduce steps of the application problem.  We could also unify work and
communication volume -- effectively assuming that a constant number of
machines words can be communicated in every clock cycle. However this
neglects that large scale computations can only be efficient on
practical machines when $m=o(w)$
\cite{amarasinghe2009exascale,Borkar13}. Thus \MRCplus\ allows us to
quantify the amount of locality present in the computations for
evaluating $\mu$ and $\rho$.

We now proceed as follows: After a discussion of related work in
Section~\ref{s:related}, Section~\ref{s:lb} derives the lower bounds.
These are not surprising but, nevertheless, not completely trivial to
derive.  We gradually approach the correponding upper bounds. On the
way, we develop several load balancing algorithms.  Some of them may
also be a basis for highly scalable in-memory implementations of
MapReduce. The more theoretical ones may help to understand
limitations of existing implementations with respect to scalability
and robustness against difficult inputs.

In Section~\ref{s:bsp} we give an almost
straightforward implementation based on randomized static load
balancing in the framework of the BSP model \cite{McC96}. It achieves
Bound~(\ref{eq:bound}) when $w=\Om{\hat{w}p\log p}$ and
$m=\Om{\hat{m}p\log p}$.  These constraints are a limitation when work
or data are highly imbalanced, when $p$ is very large or when the
inputs are relatively small. This is relevant for massively parallel
applications when many MapReduce steps have to be executed. For
example, this might be the case for online data analysis after each
step of a massively parallel scientific simulation
\cite{wang2015smart}.

We improve load balancing for mapping and reduction (Steps 1 and 3 in
Figure~\ref{fig:teaser}) in Section~\ref{s:steal}.  We design and
analyze a distributed-memory work stealing algorithm that takes
communication volume for task descriptions into account. This quite
fundamental result seems to be new and may be of independent interest.
In Section~\ref{s:shuffle}, we show how to efficiently allocate
elements of $C$ to PEs using hashing and prefix sums (Step 2 in
Figure~\ref{fig:teaser}). This is ``almost'' enough to establish
Theorem~\ref{thm:bound}.  Indeed, it would suffice to prove execution
time $\Oh{w/p+\hat{w}}$ assuming communication bandwidth proportional
to the speed of local computations.  However, for our more detailed
analysis that separates local computation from (possibly slower)
communication, it fails to establish the postcondition of
Theorem~\ref{thm:bound}. This problem can arise when
one PE happens to map many elements that all emit a large amount of data.
Similarly, the load balancer from
Section~\ref{s:steal} (Theorem~\ref{s:steal}) does not have a
postcondition that matches the precondition of the shuffling step
described in Section~\ref{s:shuffle} (Lemma~\ref{lem:shuffle}).
In Section~\ref{s:redundant} we propose two algorithms that can solve this problem.

We conclude our paper with a discussion of possible future enhancements of our results in Section~\ref{s:conclusion}.
\section{Related Work}\label{s:related}

The original implementations of MapReduce \cite{DeanGhemawat08},
[\url{hadoop.apache.org}] consider data sets that do not fit into main
memory. Newer big data frameworks like Spark \cite{Spark} or Thrill
\cite{Thrill} not only offer additional operations but also better exploit
in-memory operation where the input and output of the MapReduce steps fits into
the union of the local memories of the employed machines. This allows
much higher performance, in particular when many subsequent steps have
to be performed. Such in-memory implementations are the main focus of
our paper.

Hoefler et al. \cite{hoefler2009towards} discuss how MapReduce
computations (e.g.,
\cite{DeanGhemawat08,lee2012parallel,plimpton2011mapreduce}) are
commonly implemented in practice.  Most of these approaches are less
scalable and robust than the algorithms introduced here. They often
have some kind of centralized control that would introduce $\Om{p}$
terms into Bound~(\ref{eq:bound}). Also, elements of $A$ or $C$ may be
parcelled into packets that can destroy load balance when many
expensive elements happen to fall in one packet.  The efficient C++
based implementations MapReduce-MPI \cite{plimpton2011mapreduce} and
Thrill \cite{Thrill} use approaches similar to our BSP algorithm but
abstain from explicit randomization or redistribution.  MR-MPI
\cite{mohamed2013mro} refines this by allowing overlapping of mapping
and reduction to some extend. K MapReduce \cite{matsuda2013k} and
Mimir \cite{gao2017mimir} explicitly target large supercomputers.  K
MapReduce addresses the tradeoff between communication bandwidth and
startup latencies during the shuffling step. We avoid this important
issue by only discussing communication volume and not startup
latencies -- viewing concrete implementations of general data
exchange as a topic orthogonal to our paper. Berli\'nska and
Drozdowski \cite{BERLINSKA201814} empirically compare several
centralized load balancing algorithms for the reduction step.

On the theory side, Goodrich et al. \cite{goodrich2011sorting}
introduce the parameters $m$ and $w$ and give lower bounds based on
these parameters (although they do not elaborate how they arrive at
the communication lower bound which we prove using expander graphs).
They also introduce a parameter $M$ that bounds the input size of the
reducers, thus covering a frequent source of bottlenecks in MapReduce
algorithms.  They do not explicitly consider bottleneck input sizes or
computation times otherwise.  Furthermore, they show simulations in
the opposite direction as our paper, i.e., how machine models like
CRCW PRAMs or BSP can be simulated using MapReduce calculations.  Pace
\cite{pace2012bsp} explains how to execute MapReduce computations on
BSP when all the task execution times are known.

\section{Lower Bounds}\label{s:lb}

It is clear that the total work $w$ has to be distributed over $p$ PEs
such that at least one PE gets work $\Om{w/p}$. Similarly, some PE has
to evaluate $\mu$ (or $\rho$) for the most expensive function
evaluation. This implies a lower bound of $\Om{\hat{w}}$ for a MapReduce
operation.

The lower bounds due to communication are slightly less obvious because
we have to prove that not enough of the computations can be done
locally -- even with clever adaptive strategies.  Consider the
bipartite graph $(A\cup C, E)$ whose edges $(a,k)$ connects outputs of
$\mu$ with keys in $C$. If this graph is an expander graph, a constant
fraction of all elements of $B$ has to be communicated. Thus
$\Om{w/p}$ is a lower bound for the bottleneck communication
volume. Now consider an evaluation of $\rho$ that works on $\hat{m}$
key-value pairs emitted by $\hat{m}$ different evaluations of
$\mu$. Since the mapper has no way to predict the output of $\mu$,
these key-value pairs may all be on different PEs. This results in a
lower bound of $\hat{m}$ on the bottleneck communication
volume. Finally, latency $\Om{\log p}$ is already needed in order to
synchronize the PEs after a MapReduce computation is finished.

\section{Using the BSP Model}\label{s:bsp}

We now consider a simple implementation of a MapReduce operation in
the BSP model that uses randomized static load balancing.  Recall that
the BSP model \cite{McC96} considers globally synchronized \emph{super
  steps} where a local computation phase is followed by a message
exchange phase. A superstep takes time $w_x+L+hg$ where $w_x$ is the
bottleneck work, $L$ is the \emph{latency} parameter, $g$ is the
\emph{gap} parameter, and $h$ is the bottleneck communication volume,
i.e., the maximum number of machine words communicated on any PE. The
gap parameter allows us to put local work and communication
cost into a single expression.

Our implementation consists of two supersteps and assumes a random
distribution of the input set $A$. In the first superstep, each PE
maps its local elements. It then sends a key-value pair $(k,v)\in B$
to PE $h(k)\in 1..p$ where $h$ is a hash function.%
\footnote{Throughout this paper we use $a..b$ as a shorthand for
  $\set{a,\ldots,b}$.}  In the second superstep, each PE assembles the
received elements of $B$ to obtain set $C$.  This can be done using a
local hash table with one entry for each key.  It then applies the
reduction function $\rho$ to obtain the output set $D$. Whenever a
call of $\rho$ produces more than one output element, all but one of
these elements are sent to a random PE to establish a postcondition
that the output is randomly distributed. This postcondition allows
MapReduce steps to be chained without additional measures to establish the
precondition of random data distribution. Figure~\ref{fig:bsp} gives an example.

\begin{figure}[t]
  \includegraphics{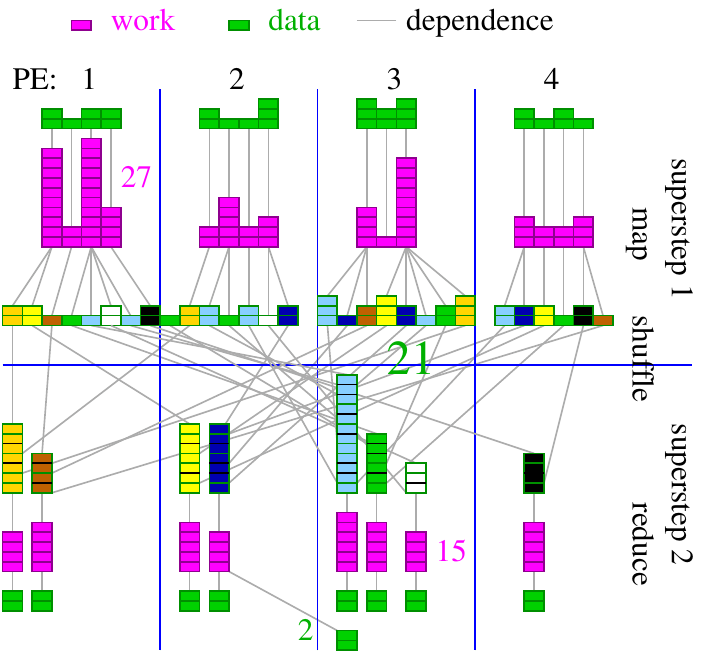}
  \caption{\label{fig:bsp}Example MapReduce problem with
    {\color{green}$m=146$}, {\color{green}$\hat{m}=12$},
    {\color{magenta}$w=101$}, and {\color{magenta}$\hat{w}=11$}.  Using the
    BSP algorithm on $p=4$ PEs, the bottleneck work for mapping is
    {\color{magenta}27} units and {\color{magenta}15} units for
    reducing.  Bottleneck communication volume is {\color{green}21}
    units for the first superstep and {\color{green}2} for the second
    one.}
\end{figure}

\begin{theorem}\label{thm:MapReduceBSP}
Assume that the input set $A$ is randomly
distributed over the PEs.
Also assume that the hash function $h$ behaves like a truly random mapping.%
\footnote{There is a large amount of work on how this assumption can
  be lifted; e.g., \cite{dietzfelbinger2012randomness}. However, we
  view this interesting subject as orthogonal to the subject of our
  paper.}  Then our BSP-based implementation takes expected time
\begin{equation}\label{eq:MapReduceBSP}
  \Oh{
     \hat{w}\hat{o}\left(\frac{w}{\hat{w}},p\right) + L +
    g\hat{m}\hat{o}\left(\frac{m}{\hat{m}},p\right)}\komma
\end{equation}
    where $\hat{o}(b,p)$ denotes the expected maximum occupancy of a
    bin when randomly placing $\ceil{b}$ balls into $p$ bins (see
    \cite{RaaSte98} for an exhaustive case distinction).  In
    particular, the bound becomes
  \begin{equation}\label{eq:MapReduceBSPsimple}
    \Oh{\frac{w}{p}+L+g\frac{m}{p}} \text{ if }  w=\Om{\hat{w}p\log p}\text{ and }m=\Om{\hat{m}p\log p}\punkt
  \end{equation}
  Moreover, the output set $D$ is randomly distributed over the PEs.
\end{theorem}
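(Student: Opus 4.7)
The plan is to analyze the two supersteps separately and reduce every load-balancing question to a weighted balls-into-bins problem. The central technical tool is the following: if items of individual weight at most $\hat{x}$ and total weight at most $X$ are assigned to $p$ PEs by independent uniform draws, the expected maximum weight landing at any PE is $\Oh{\hat{x}\cdot\hat{o}(X/\hat{x},p)}$. This is the standard weighted variant of the max-occupancy estimate cited in the theorem. Using this template we only need to identify, for each of the four quantities (map work, shuffle-send volume, shuffle-receive volume, reduce work), a grouping of the elements into ``balls'' of bounded weight that are placed by independent uniform draws.

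For the first superstep I would proceed as follows. (i) Map work per PE: since $A$ is randomly distributed, each $a\in A$ lands at a uniform PE; its evaluation cost is at most $\hat{w}$ and the costs sum to $w$, giving $\Oh{\hat{w}\,\hat{o}(w/\hat{w},p)}$. (ii) Send volume per PE: each call $\mu(a)$ emits at most $\hat{m}$ words and total emissions are at most $m$, so the same tool applied to $a\in A$ (now weighted by $|\mu(a)|$) yields $\Oh{\hat{m}\,\hat{o}(m/\hat{m},p)}$. (iii) Receive volume per PE: group $B$ by key; by definition of $\hat{m}$, each key's total data is at most $\hat{m}$, the key weights sum to at most $m$, and the hash $h$ sends each key to a uniform PE independently, so the bound is again $\Oh{\hat{m}\,\hat{o}(m/\hat{m},p)}$. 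The superstep also incurs latency $L$ and pays $g$ per communicated word. For the second superstep the reduce work is bounded analogously (each call of $\rho$ has cost $\leq\hat{w}$, total cost $\leq w$, and the key that triggers it is placed uniformly by $h$), and the surplus-output redistribution is covered by yet another instance of the same tool, with per-call volume $\leq\hat{m}$ and total volume $\leq m$, routed to explicitly uniform random PEs. Summing both supersteps yields~(\ref{eq:MapReduceBSP}); the simplification~(\ref{eq:MapReduceBSPsimple}) follows from the classical estimate $\hat{o}(b,p)=\Oh{b/p}$ whenever $b=\Om{p\log p}$, applied with $b=w/\hat{w}$ and $b=m/\hat{m}$. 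Random distribution of $D$ is automatic because every output element is either the sole output of a reducer (placed by $h$) or explicitly resent to a uniformly random PE.

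The subtle point, and the main obstacle, is the receive-side bound for shuffling: pairs $(k,v)\in B$ that share a key are correlated, being forced to the same destination $h(k)$, so they cannot be treated as independent balls at the granularity of individual pairs. The observation that rescues the argument is that $\hat{m}$ bounds not only the output of a single $\mu$ call but also the input to a single $\rho$ call, so the per-key total $|\{v:(k,v)\in B\}|$ is at most $\hat{m}$. Grouping by key therefore produces balls of weight $\leq\hat{m}$ whose destinations are truly independent and uniform, which is exactly the setting of the weighted balls-into-bins bound and closes the gap.
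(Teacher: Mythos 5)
Your proposal is correct and follows essentially the same route as the paper: both reduce each of the four bottleneck quantities (map work, send volume, receive volume, reduce work) to a weighted balls-into-bins problem and invoke the fact that the expected maximum is dominated by the fully skewed instance with $\ceil{X/\hat{x}}$ balls of weight $\hat{x}$, which the paper justifies by citing \cite{San96a,RaaSte98}. If anything, you are more explicit than the paper's outline on the one delicate point --- grouping the received pairs by key so that the correlated destinations $h(k)$ become independent balls of weight at most $\hat{m}$ --- which the paper passes over with ``the argument is similar.''
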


\begin{Proof}{(Outline)}
The local work for the first superstep is dominated by the maximum
time for evaluating the mapping function $\mu$ for all elements
assigned to a PE.
The expectation for this
maximum allocation is largest when the work is as skewed as possible,
i.e., when the time for evaluating $\mu$ is zero except for
$k=\ceil{w/\hat{w}}$ elements with required time $\hat{w}$; see
\cite{San96a}. Thus, the expected maximum time 
can be bounded by $\hat{w}$ times the maximum occupancy of a bin when
randomly allocating $k$ balls to $p$ bins. This is a well analyzed
problem \cite{RaaSte98}. We get a bound of $\hat{w}\hat{o}(w/\hat{w},p)$
of local work for the first superstep.

The argument for the bottleneck communication volume of the first
superstep is similar. Our precondition ensures that no evaluation of
$\mu$ produces more than $\hat{m}$ machine words of data and the
overall volume of produced data is at most $m$. We get bottleneck
communication volume $\hat{m}\hat{o}(m/\hat{m},p)$ and therefore a
term $g\hat{m}\hat{o}(m/\hat{m},p)$ for the communication cost of the
first superstep.

By allocating elements of $C$ via random hashing, we ensure that also
the executions of the reducer $\rho$ in the second superstep are
randomly allocated to PEs.  Thus we get analogous bounds as for the
first superstep -- invoking the analysis from \cite{San96a,RaaSte98}
both for the amount of received data and for the work performed by
$\rho$.  Using local hash tables, assembling the elements of $C$ can
be done with expected work linear in the amount of received data.

Since $\hat{o}(x,p)=\Oh{x/p}$ for $x=\Om{p\log p}$, we also get
Bound~(\ref{eq:MapReduceBSPsimple}).

Finally, the postcondition is established by randomly dispersing data
produced by reducers that emit more than one element of $D$ -- the
first emitted element is already randomly allocated thanks the the
randomization through $h$.%
\footnote{In many applications $\rho$ only needs to output at most one
  element.}  The resulting bottleneck communication volume is
again implied by \cite{San96a,RaaSte98} -- both for the sent and
received amount of data.

\end{Proof}

\section{Distributed Memory Work Stealing}\label{s:steal}

In this section, we concentrate on the difficult load balancing
problem of evaluating $\mu$ and $\rho$ in the absence of information
on the cost of each function evaluation (job).  The randomized static
load balancing used in Section~\ref{s:bsp} cannot adapt to
differences in the amount of work allocated to a PE. More generally,
we have to avoid grouping jobs into parcels that have to be evaluated
on the same PE before we know their cost.  We also want to avoid
bottlenecks such as in a master-worker load balancing scheme; e.g., \cite[Section~14.3]{SMDD19}.

We thus consider work-stealing load balancers
\cite{FinMan87,BluLei99,San02b} for handling the function evaluations
of $\mu$ and $\rho$. They provide a highly scalable dynamic load
balancing algorithm with adaptive granularity control.
We overcome their restriction that they assume shared memory \cite{BluLei99}
or job descriptions that have fixed length \cite{FinMan87,San02b}.
More concretely, we build on the asynchronous distributed-memory
variant analyzed in \cite{San02b}. 
Instantiating this highly generic algorithm to our requirements,
a piece of work represents a subarray of
jobs.  Splitting a subarray means sending away half its unprocessed
jobs (never including the one that is currently being processed
locally). Figure~\ref{fig:steal} gives an example.

\begin{figure}[h]
  \includegraphics{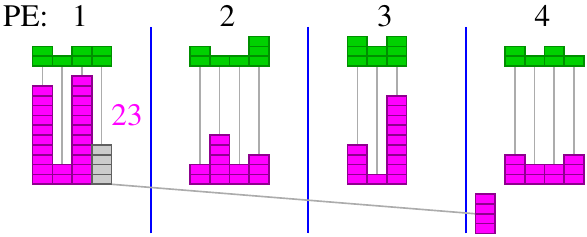}
  \caption{\label{fig:steal}
    Continuation of the example from Figure~\ref{fig:bsp} showing a possible impact of work stealing
    on the mapping step. After 10 units of local computation PE~4 might steal from PE~1.
    At that time PE~1 has already finished its first job and started on the first job.
    Thus it gives away half the remaining jobs which ist its fourth job.
    At the time further PEs get idle, PE~1 has already started its large third job.
    Thus further reductions of the bottleneck work cannot take place in this small instance.}
\end{figure}

Unfortunately, the result is not directly applicable since
communicating a subarray of jobs entails communicating the
descriptions of all the jobs it contains.  Moreover, some jobs may be
migrated up to $\log m$ times.  We address this problem by initially
locally sorting the jobs by approximately decreasing description length, i.e., PEs
preferably process jobs with long description and communicate jobs
with short description to save communication volume.\footnote{We can
  achieve a similar effect in expectation by not sorting the jobs but
  only randomly permuting them. This is faster and achieves some
  additional load balancing with respect to the (unkown) local
  work. We use sorting here because it seems theoretically cleaner to
  restrict randomization to where it is really needed and because our
  approach saves communication bandwidth for skewed input sizes.}  With
a generalized analysis, we obtain the following result that may be of
independent interest.

\begin{theorem}\label{thm:steal}
  Consider an array of independent jobs whose total description
  length is bounded by $m$ and where, initially, each PE locally
  stores jobs that can be described using $\Oh{m/p+\hat{m}}$ machine
  words.  Let $w$ denote the total work needed to execute all the jobs
  and let $\hat{w}$ denote the maximum time needed to execute a single
  job. Then $p$ PEs can process all jobs using expected local work%
  \footnote{A more detailed analysis could establish that the constant factor in front of the term $w/p$ can get arbitrarily close to 1. We abstain from this variant of the bound in order to keep the notation simple.}
  $\Oh{w/p+\hat{w}+\log p}$ and expected bottleneck communication volume
  $\Oh{m/p+\hat{m}+\log p}$.
\end{theorem}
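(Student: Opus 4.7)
The plan is to build on the analysis of the asynchronous distributed-memory work stealing algorithm from \cite{San02b}, extending it from fixed-size job descriptions to the variable-size setting by exploiting the decreasing-length sort applied initially at every PE.

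First I would argue that the local work bound $\Oh{w/p+\hat{w}+\log p}$ follows from the analysis in \cite{San02b} essentially unchanged: the work-stealing protocol itself is not modified, each job takes local time proportional to its own work irrespective of its description length, and the existing amortized accounting of idle cycles against steal events already yields the claimed bound (the $\hat{w}$ term absorbs the last-job effect and the $\log p$ term absorbs steal overhead in expectation). The heart of the proof is therefore the bottleneck communication bound.

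For the communication volume, I would decompose the traffic into three categories: (i) steal request and reply control messages, each of $\Oh{1}$ size, whose per-PE count is $\Oh{\log p}$ in expectation by \cite{San02b} and contributes the $\log p$ term; (ii) the initial per-PE payload of $\Oh{m/p+\hat{m}}$ words already in place; and (iii) the payloads actually carried by job migrations. The key combinatorial lemma for (iii) is a direct consequence of the decreasing-length sort: when a PE splits its current unprocessed subarray to answer a steal, the sent half (the later positions in the sorted order, consisting of jobs with shorter descriptions) has total description length at most that of the retained half. Iterating this through the binary recursion tree of splits of a subarray of initial description length $L$, the total description volume attributable to its descendants telescopes as a geometric series summing to $\Oh{L}$. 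Summing over the $p$ initial per-PE subarrays, the global payload volume across all migrations is $\Oh{m+p\hat{m}}$.

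The main obstacle will be converting these aggregate totals into a true bottleneck (per-PE maximum) bound. For this I would exploit that every steal selects a uniformly random victim: a Chernoff-style concentration argument bounds the number of times any particular PE is chosen above its expectation by $\Oh{\log p}$, and coupling this with the per-steal payload bound (at most half of the victim's remaining subarray, itself shrinking geometrically) yields the required $\Oh{m/p+\hat{m}+\log p}$ bottleneck bound. A subtle point is that the $\hat{m}$ term must not be amplified by repeated re-stealings of a single unusually long job description; here the sorting invariant is again crucial, since an outsized-description job is placed at the front of its PE's local queue and is therefore processed before it can ever be handed out by a split.
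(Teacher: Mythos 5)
Your overall strategy---reduce to the analysis of \cite{San02b} and use the decreasing-length sort to control the payload of job migrations---is the same as the paper's, but the key combinatorial step is carried out incorrectly. You claim that, for an initial subarray of total description length $L$, the volume summed over all migration events in its recursion tree ``telescopes as a geometric series summing to $\Oh{L}$'', yielding a global migration payload of $\Oh{m+p\hat{m}}$. The sorting invariant only gives that each \emph{individual} sent piece has at most half the volume of the subarray it is split from; it does not prevent a holder from re-sending, over several successive splits, almost the entire volume it received, since the pieces sent in successive splits are disjoint and can together exhaust everything but the single longest job. Hence the per-level totals of the recursion tree need not decay geometrically: level $g$ can carry a constant fraction of the volume of level $g-1$, and because a job can be migrated up to $\log_2 m$ times (its subarray's \emph{count} halves at every split), nothing in your lemma rules out an aggregate migration volume of order $(m+p\hat{m})\log m$---think of $m$ unit-length, near-zero-work jobs that keep getting re-stolen. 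Your Chernoff step then inherits this loss: dividing such an aggregate by $p$ does not give the claimed $\Oh{m/p+\hat{m}+\log p}$ bottleneck bound.

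What the sorting actually buys, and what the paper's outline uses, is a \emph{per-chain} geometric decay: a piece that has already been migrated $g$ times has volume at most $V_{\max}/2^{g}$ with $V_{\max}=\Oh{m/p+\hat{m}}$, the pieces a fixed PE sends out of its own initial data are disjoint (so total at most $V_{\max}$), the number of generations is at most $\log m$, and the resulting per-PE migration volume and ``dead times'' are bounded by $\Oh{m/p+\hat{m}+\log m}$, after which $\log m$ is converted into $\log p$ via the observation that $m/p+\log m=\Oh{m/p+\log p}$. You would need to redo your step (iii) along these per-PE, per-generation lines rather than via a global volume count. Two smaller omissions: the paper deliberately uses a bucket sort on $\floor{\log x}$ (sorting only approximately, within a factor of two per bucket) so that preprocessing costs $\Oh{m/p+\log m}$---an exact comparison sort of up to $m/p+\hat{m}$ jobs is not obviously affordable within the claimed local-work budget---and your local-work paragraph silently ignores both this preprocessing cost and the dead times during which an in-flight subarray cannot be split, which must be (and in the paper are) charged against the same bounds.
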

\begin{proof}
We only ouline how the analysis of \cite{San02b} can be adapted. 
Using bucket sort by the value $\floor{\log x}$
for a job of description length $x$, 
preprocessing is possible in time $\Oh{m/p+\log m}$ such that
job sizes in each bucket differ by at most a factor of two.

Adapted to the notation used here, but ignoring the nonuniform communication costs,
the outcome of the analysis from \cite{San02b} is that
local work $\Oh{w/p+\hat{w}}$ and bottleneck communication volume
$\Oh{\log m}$ are sufficient. Transferring subarrays of jobs implies additional ``dead times'' during which migrating subarrays cannot be
split.  However, sorting ensures that
the data volume in subsequent subarray migrations decreases at least geometrically.
Hence, the overall migration volume (and the corresponding dead times) are linear in
the original local data volume of one PE.
Splitting
always in half with respect to the number of remaining jobs ensures
that there are at most $\log m$ generations. Overall, the dead times
sum to
$$\Oh{\frac{m}{p}+\hat{m}+\log m}=\Oh{\frac{m}{p}+\hat{m}+\log p}\punkt$$
The latter asymptotic 
estimate stems from the fact that $m/p+\log
m=\Oh{m/p+\log p}$ -- whenever $\log m =\omega(\log p)$ the term $m/p$
dominates $\log m$.
\end{proof}

\section{Shuffling}\label{s:shuffle}

Shuffling (Step~2 in Figure~\ref{fig:teaser}) has the task to establish two
preconditions for efficiently performing the subsequent application of
$\rho$ to all elements of $C$ (Step~3): All the data needed for
each element of $C$ should be moved to the same PE, and, overall, each
PE should receive $\Oh{m/p+\hat{m}}$ machine words of data. This is at
the same time easier and more difficult than the load balancing
problems from steps 1 and 3. It is easier because all the relevant data
is available.  It is more difficult, because this data is distributed
over all PEs. We thus use a different load balancing algorithm here based on
hashing and prefix sums.

The problem of the BSP algorithm from Section~\ref{s:bsp} is that the
hash function with its range $1..p$ may map too many heavy elements of
$C$ to the same PE. Hence, we use a two-stage approach. First, we hash
keys to a larger range $1..m^c$ for a constant $c>2$. Hash values
$h(c)$ in that range are unique with high probability \cite{RaaSte98}.
We then aggregate the amount of data associated with the same $h(c)$.
For each element $b=(k,x)\in B$, we move a pair $b'=(h(k), |k|+|x|)$ to
PE $i=h(k)\bmod p$. Note that the size of this pair is only a constant
number of machine words.  In contrast to the BSP algorithm, PE $i$
only aggregates the overall amount of data $v(c)$ needed for elements
$c\in C$. Actually \emph{assigning} elements of $c$ to PEs is done by
computing a prefix sum over the $v(c)$ values. If the total size of elements in $C$ is
$m'$ and for an element $c\in C$ we have
$\sum\setGilt{v(x)}{h(x)<h(c)}=n$, we assign element $c$ to PE
$1+\floor{pn/m'}$.  Thus, each PE is assigned elements of $C$ with
total volume $m'/p+\Oh{\hat{m}}=\Oh{m/p+\hat{m}}$ with high
probability. The PEs holding the input of the shuffling step are informed about these assignments
by reply messages to the $b'$ tuples. Thus, the actual data from $B$ is directly delivered
to the PE that actually reduces it.
Figure~\ref{fig:shuffle} gives an example.
We obtain the following lemma:

\begin{figure}[t]
  \input{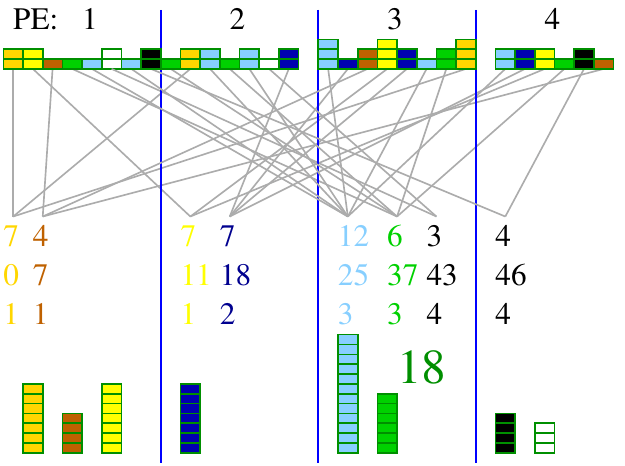_t}
  \caption{\label{fig:shuffle}Continuation of the example from Figure~\ref{fig:bsp} showing a prefix-sum based shuffling step. The bottleneck communication volume is reduced from 21 to 18.}
\end{figure}

\begin{lemma}\label{lem:shuffle}
  Suppose that before a shuffling step, the elements of $B$ are
  distributed in such a way that each PE holds data volume
  $\Oh{m/p+\hat{m}}$. Then shuffling can be implemented with expected local
  work and bottleneck communication volume
  $\Oh{m/p+\hat{m}+\log p}$. Moreover, each PE receives elements of $C$ with total volume
  $\Oh{m/p+\hat{m}}$.
\end{lemma}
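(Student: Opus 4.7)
The plan is to analyze the procedure sketched before the lemma in four phases and bound the work and communication of each phase separately, then show that the routing implied by the prefix-sum assignment produces the claimed volume guarantee.

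Phase~1 (announcement). Each PE turns every local $b=(k,x)\in B$ into a constant-size pair $b'=(h(k),|k|+|x|)$ and sends it to PE $h(k)\bmod p$. Since the precondition gives each PE at most $O(m/p+\hat m)$ words of $B$ and since $|b'|=O(1)$, each PE emits at most $O(m/p+\hat m)$ such pairs. Because $h$ behaves like a truly random function, the number of pairs arriving at any fixed PE is the maximum load of a balls-into-bins experiment with $\Theta(m/p+\hat m)\cdot p=\Theta(m+\hat m p)$ unit-size balls across $p$ bins; the analysis of \cite{RaaSte98} then gives bottleneck receive volume $O(m/p+\hat m+\log p)$ with high probability. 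The local work of sorting and summing by $h(k)$ within a PE is linear in the number of received pairs, so also $O(m/p+\hat m+\log p)$.

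Phase~2 (aggregation and prefix sum). Once announcements are grouped by their hash value $h(k)$, each receiving PE computes, for every $c=(k,X)\in C$ routed to it, the sum $v(c)=|k|+\sum_{x\in X}|x|$. With $c>2$, the hash image is unique over all keys with high probability, so no two distinct keys collide and this aggregation is local. A parallel prefix sum over the $v(c)$ values, laid out in order of $h(k)$, takes $O(m/p+\hat m+\log p)$ work and communication, with the $\log p$ term coming from the tree-based prefix reduction.

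Phase~3 (assignment and reply). Given $m'=\sum_{c\in C}v(c)$ and the prefix $n(c)$, PE holding $c$ computes its target $1+\lfloor p\,n(c)/m'\rfloor$ and replies this destination to each announcing PE. This is essentially the reverse of Phase~1, so the same balls-into-bins estimate bounds both the number of replies and the work, yielding $O(m/p+\hat m+\log p)$.

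Phase~4 (data delivery and volume bound). Each PE now forwards its real $(k,x)\in B$ records to the target returned by the reply. The outgoing volume at any PE is at most its precondition volume $O(m/p+\hat m)$. For the received volume we argue as follows: the assignment rule partitions the sorted-by-$h$ sequence of $c\in C$ into $p$ contiguous blocks whose cumulative weights lie in intervals of length $m'/p$. Hence, each PE receives a block whose total weight exceeds $m'/p$ by at most the weight of the single element straddling the boundary, which is at most $\hat m$ by definition of $\hat m$. Since $m'\le m$, this gives per-PE received volume $O(m/p+\hat m)$ deterministically, establishing the postcondition. The main subtlety, and the part I expect to need the most care, is precisely this boundary-straddling argument: one must ensure that the prefix-sum granularity of single elements of $C$ (rather than of individual $(k,v)$ pairs) still yields the $\hat m$ overshoot, which follows because a single $c\in C$ carries weight $v(c)\le\hat m$ by the definition of the maximum object size $\hat m$. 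Summing the four phases gives the bounds claimed in Lemma~\ref{lem:shuffle}.
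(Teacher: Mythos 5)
Your proposal follows the same route as the paper's (very terse) proof outline: bound the constant-size announcement traffic by the precondition and a balls-into-bins argument, charge $\Oh{\log p}$ for the tree-based prefix sum, observe that the reply traffic reverses the announcement traffic, and note that the final delivery moves each word of $B$ once. The only substantive addition is that you spell out the boundary-straddling argument (each PE overshoots $m'/p$ by at most one element of $C$, of weight at most $\hat{m}$) that the paper states without proof in the text preceding the lemma; this is a correct and welcome elaboration, not a different approach.
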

\begin{Proof}{Outline}
  The amount of data send in the counting step is $\Oh{m/p+\hat{m}}$
  by the procondition.  Using the same balls-into-bins notation as in
  Theorem~\ref{thm:MapReduceBSP}, the expected amount of data received is
  $\leq\hat{o}(m,p)=\Oh{m/p+\log p}$.  This communication is reversed
  later for comunicating the allocations of keys with the same
  asymptotic cost..

  The prefix sum calculation needs work $\Oh{m/p+\log p}$ and
  bottleneck communication volume $\Oh{\log p}$.  Actually delivering
  the data then incurs bottleneck communication volume
  $\Oh{m/p+\hat{m}}$.
\end{Proof}
\section{Establishing Postconditions}\label{s:redundant}

We propose two solutions to the problem outlined in the introduction
because we want to illustrate the design landscape of scalable load
balancing algorithms for MapReduce computations.
We only describe what is done for balancing the output volume of the mapping step.
The output of the reduction step can be balanced in an analogous fashion.
Let $m'$ denote the total data volume produced by the mapping step.

\subsection{Redundant Remapping}

Our first approach analyzes the situation after the mapping step.  A
remapping step is triggered if any PE has an output volume that
significantly exceeds $m'/p+\hat{m}'$ where $\hat{m}'\leq\hat{m}$ denotes the maximal output volume of a call to $\mu$.  For a start, we consider a simple
implementation that redos all the mapping calls after a data
redistribution. Since here ``all cards are on the table'', we can use
a prefix sum based approach somewhat similar to
Section~\ref{s:shuffle}. Each PE considers those elements it has
processed locally.
The only complication is that we have to balance input data volume, local computation, and
output data volume simultaneously. We do this by appropriately scaling the values.
Let $w'$ denote the total time spent for
mapping steps (this value can be measured during the initial execution
of the mapping operation).  For an element $a\in A$ with output data volume $o_a$, and work $w_a$, we compute a weight $$W_a\Is w_a+o_a\frac{w'}{m'}\punkt$$
Let $W\Is\sum_{a\in A}W_a$.
Now we use prefix
sums and data redistribution in order to assign to each PE elements with total
weight $\sum_{a\in A}W/p$ plus possibly one further \emph{overload element}.
Figure~\ref{fig:postCondition} gives an example.
Below we prove the following result:

\begin{figure}[h]
  \hspace*{-9mm}\input{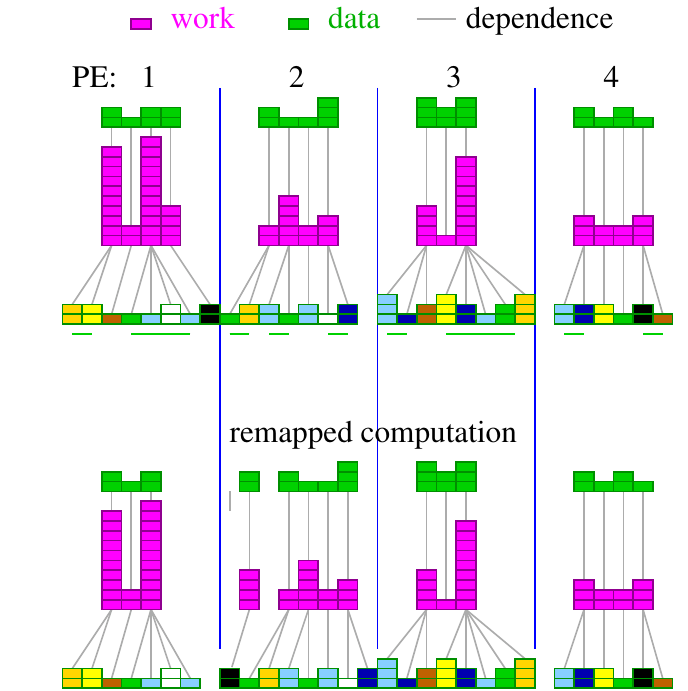_t}
  \caption{\label{fig:postCondition}Continuation of the example from
    Figure~\ref{fig:bsp} Redistribution of the mapping computation
    based on a weighted sum of work and output volume.}
\end{figure}

\begin{lemma}\label{lem:redundant}
\sloppypar Remapping can be implemented to run with local work $\Oh{w/p+\hat{w}+\log p}$
and bottleneck communication volume $\Oh{m/p+\hat{m}+\log p}$ such
that no PE outputs more than $\Oh{m/p+\hat{w}}$ machine words of data.
\end{lemma}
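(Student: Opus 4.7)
The plan is to execute remapping in four phases: (i) aggregate the freshly measured $w_a$ and $o_a$ values into the global quantities $w'$ and $m'$; (ii) compute a prefix-sum-based reassignment of elements using the weights $W_a = w_a + o_a w'/m'$; (iii) redistribute the inputs to their new owners; and (iv) re-execute $\mu$ on the redistributed inputs.

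Phase (i) is an all-reduce over two scalars, which costs $\Oh{\log p}$ communication and $\Oh{m/p+\hat{m}+\log p}$ local work. In phase (ii) each PE forms its local $W_a$'s, and then a standard distributed prefix sum is run over the elements in their current PE-order, again at cost $\Oh{m/p+\hat{m}+\log p}$ local work and $\Oh{\log p}$ communication. Element $a$ with prefix-sum value $S_a$ is assigned to PE $1+\floor{pS_a/W}$, where $W=\sum_a W_a = 2w'$. By construction, every PE receives a contiguous interval of prefix sums of width at most $W/p$, plus at most one overload element straddling a boundary, whose individual weight is bounded by $W_{\max}\leq \hat{w}+\hat{m}\cdot w'/m'$.

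The work and output bounds then follow by a uniform ``weight pays for everything'' argument. Because $W_a\geq w_a$, phase-(iv) work on any PE is at most $W/p+\hat{w}=2w'/p+\hat{w}=\Oh{w/p+\hat{w}}$. Because $W_a\geq o_a w'/m'$, phase-(iv) output volume on any PE is at most $(W/p)\cdot m'/w'+\hat{m}=2m'/p+\hat{m}=\Oh{m/p+\hat{m}}$, which is exactly the postcondition demanded on the output volume.

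The main obstacle is tightening the bottleneck communication bound for phase (iii) to $\Oh{m/p+\hat{m}+\log p}$. Each PE sends at most what it already held, which is $\Oh{m/p+\hat{m}}$ by the precondition, so the sent side is under control. Bounding the \emph{received} input volume is more delicate, because $W_a$ does not literally contain the input size $|a|$. I would close this gap by invoking the natural inequality $|a|\leq w_a$ (reading the input is part of the work) together with $|a|\leq \hat{m}$, so that the weighted partition also approximately balances $\sum_{a\in P_i}|a|$; in the typical regime where $w'=\Th{m'}$ this is immediate. A clean alternative, should the above slack be too loose in pathological regimes, is to fold a third term $|a|\cdot w'/m$ into $W_a$, which keeps $W=\Oh{w'}$ and propagates through the rest of the argument unchanged. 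Adding the $\Oh{\log p}$ contributions from the all-reduce and prefix sum then yields the claimed $\Oh{m/p+\hat{m}+\log p}$ bottleneck communication volume.
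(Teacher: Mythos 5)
Your proposal is correct and follows essentially the same route as the paper: measure $w_a$ and $o_a$, form the combined weights $W_a=w_a+o_a w'/m'$ with $W=2w'$, partition by prefix sums into intervals of weight $W/p$ plus one overload element, and read off the work bound from $W_a\geq w_a$ and the output bound from $W_a\geq o_a w'/m'$. The one place you go beyond the paper is the bound on the \emph{received} input volume in the redistribution (the paper disposes of this in a single sentence by citing the redistribution primitive of \cite{HubSan15}); your second fix --- folding a term $|a|\cdot w'/m$ into $W_a$, which keeps $W=\Oh{w'}$ and yields $\sum_{a\in A_i}|a|=\Oh{m/p+\hat{m}}$ --- is the right way to make that step airtight, whereas your first suggestion via $|a|\leq w_a$ only gives $\Oh{w/p}$ and does not suffice when $w\gg m$.
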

\begin{proof}
Since the redistribution only communicates the input data (which is balanced by the analysis
of the previous operations), this can be done
with time and communication volume $\Oh{m/p+\hat{m}+\log p}$ (see
\cite{HubSan15} for details of a data redistribution).

For the further analysis note that
$$W\Is\sum_{a\in A}W_a\!=\!\sum_{a\in A}\!w_a\!+\!o_a\frac{w'}{m'}\!=\!\sum_{a\in A}\!w_a\!+\!\frac{w'}{m'}\sum_{a\in A}\!o_a\!=\!w'\!+\!m'\frac{w'}{m'}\!=\!2w'$$
and let $A_i$ denote the set of nonoverload input elements assigned to PE $i$.
Data redistribution ensures that the elements in $A_i$ have total weight at most $W/p$.

Now, for any PE $i$, consider the local work $\sum_{a\in A_i}w_a$. We have
$$\sum_{a\in A_i}w_a\leq \sum_{a\in A_i}w_a+o_a\frac{w'}{m'}=\sum_{a\in A_i}W_a\leq \frac{W}{p}\leq2\frac{w'}{p}\punkt$$
The overload element represents work at most $\hat{w}$ so that the local work within evaluations of $\mu$ is at most $2w'/p+\hat{w}=\Oh{w/p+\hat{w}}$.

For the bottleneck communication volume $\sum_{a\in A_i}o_a$ observe that, similarly,
$$\sum_{a\in A_i}o_a\frac{w'}{m'}\leq \sum_{a\in A_i}W_a\leq2\frac{w'}{p}\punkt$$
Multiplying this inequality with $m'/w'$ yields 
$$\sum_{a\in A_i}o_a\frac{w'}{m'}\cdot\frac{m'}{w'}=
\sum_{a\in A_i}o_a\leq 2\frac{w'}{p}\cdot\frac{m'}{w'}=
2\frac{m'}{p}\punkt$$
The overload element represents output data volume at most $\hat{m}$ so that the bottleneck
output volume is at most $2m'/p+\hat{m}=\Oh{m/p+\hat{m}}$.
\end{proof}

For the sake of a simple analysis, the above redistribution algorithms
maps every element twice.  This can be reduced by only redistributing
some elements. More concretely, each PE identifies local elements
whose total output data volume is above some threshold $bm'/p$ and
redistributes the excess to those PEs whose output data volume is
lower. In \cite{HubSan15} it is explained how this can be done efficiently using prefix sums, merging, and segmented gather/scatter operations.

\subsection{Work Stealing with Strikes}\label{ss:strikes}

We only outline the second approach which is a bit more complicated to
analyze but indicates that the problem can be solved without redundant
function evaluations.  For a start, let us assume that the algorithm
receives $m'$ as an input.  Then we can modify the work stealing load
balancer from Section~\ref{s:steal} so that a worker stops doing local
work (it \emph{goes on strike}) when it has produced more than $b
m'/p$ words of output for an appropriate constant $b >1$. From then on
it does not send requests by itself.  It answers work requests as
before by splitting off half its remaining jobs. Since at most $p/b$
PEs can go on strike, the remaining $\Om{p}$ PEs can efficiently
handle the remaining work. The assumption that $m'$ is known can be
lifted by estimating this value from a sample of mapper
evaluations. We can also monitor the total produced data volume in the
background and the PEs whose current volume significantly exceeds the
current average go on strike (possibly temporarily).

\section{Conclusions and Future Work}\label{s:conclusion}

This paper closes gaps between MapReduce as an abstract model of
computing (\MRC$\rightarrow$\MRCplus) and realistic machine models.
From a more practical perspective, our algorithms might also help to
improve practical implementations.

Although our analysis neglects constant factors, our algorithms could
be an interesting basis for practical implementations.  Work stealing
is an approach widely used in practice and allows low overhead
adaptive load balancing. At least the variant from
Section~\ref{ss:strikes} performs no redundant function evaluations.
The shuffling step moves the actual data only once in a single
BSP-like data-exchange step.  The additional counter-exchange step
could be a significant overhead when the elements in $B$ are
small. Here we could further optimize.  For example, we could adapt
the duplicate detection techniques from \cite{SSM13} to reduce the
data volume per element to a value close to $\log p$ bits.  Further reductions
might be possible by exploiting that it suffices to approximate the
data allocation. For example, by only communicating an appropriate
Bernoulli sample of the machine words used to represent $B$, we could
achieve a good distributed approximation of the element sizes in $C$.

Our comparison to existing tools is unfair insofar as big data tools
handle additional issues like fault-tolerance and I/O. Thus, studying
generalizations of our algorithms is an interesting direction of
future research -- both theoretical and practical.  We would like to
have algorithms that tolerate errors like PE failures and that also
balance fluctuations in the speed of PEs or communication
links. Further, we would like to minimize I/O costs in an appropriate
model of distributed external memory.

We can also look beyond MapReduce.
At the same time as the \MRC\ model has gained popularity as a
theoretical model, practitioners have increasingly realized that plain
MapReduce alone is not enough to implement a sufficiently wide range
of applications efficiently.  Breaking down an application into
MapReduce steps often requires a large number of steps and thus
complicates algorithm design.  This is exacerbated by the requirement
to communicate (and possibly move to/from external memory) basically
all the involved data in
every step.  Even the original MapReduce publication
\cite{DeanGhemawat08} already introduces a more
communication-efficient variant with reducers that allow local
reduction of data with the same key, e.g., using a commutative,
associative operator like $+$ or $\min$.  More recent big data tools
such as Spark \cite{Spark}, Flink \cite{Flink}, or Thrill
\cite{Thrill} adopt the highly abstract basic approach of MapReduce
but offer additional operations and/or data types.  For example, the
Thrill framework \cite{Thrill} is based on arrays and offers
operations, for mapping, reducing, union, sorting, merging,
concatenation, prefix sums, windows,\ldots.  The \MRCplus\ model
introduced above can be adapted to this approach.  For each operation,
we analyze its complexity in a realistic model of parallel computation
and possibly simplify it to get rid of small but complicated factors
that may be an artifact of the concrete implementation.


\bibliographystyle{plain}
\bibliography{diss}


\end{document}